\documentclass{ifacconf}
\usepackage{graphicx}      
\usepackage{natbib}        
\usepackage{amsmath,amssymb,amsfonts}
\usepackage{mathtools}
\usepackage{mathrsfs}
\usepackage{bm}
\usepackage{algpseudocode} 
\usepackage[algo2e,vlined,ruled]{algorithm2e}

\usepackage{tikz-cd}
\usepackage{tikz}
\usetikzlibrary{arrows.meta, positioning, calc}
\usetikzlibrary{decorations.pathreplacing}
\usepackage{etoolbox}
\usepackage{arydshln}
\AtBeginDocument{

 \newtheorem{newassum}{Assumption}
 \renewenvironment{assum}{\begin{newassum}}{\end{newassum}}
  \newtheorem{newthm}{Theorem}
 \renewenvironment{thm}{\begin{newthm}}{\end{newthm}}
 \newtheorem{newlem}{Lemma}
 \renewenvironment{lem}{\begin{newlem}}{\end{newlem}}
 \newtheorem{newdefn}{Definition}
 \renewenvironment{defn}{\begin{newdefn}}{\end{newdefn}}
  \newtheorem{newrem}{Remark}
 \renewenvironment{rem}{\begin{newrem}}{\end{newrem}}
}

\allowdisplaybreaks
\newcommand{\R}{\mathbb{R}}

\newcommand{\rank}{\mathrm{rank}\, }
\newcommand{\kernel}{\mathrm{Ker}\, }
\newcommand{\image}{\mathrm{Im}\, }

\newcommand{\X}{\mathscr{X}}

\newcommand{\W}{\mathscr{W}}

\newcommand{\col}{\mathrm{col}}
\newcommand{\matrices}[1]{\begin{bmatrix} #1\end{bmatrix}}

\begin{document}
\begin{frontmatter}

\title{
Distributed State Estimation for Discrete-Time Systems With Unknown Inputs: An Optimization Approach
} 


\author[First]{Ruixuan Zhao} 
\author[Second]{Guitao Yang} 
\author[Third]{Nicola Bastianello}
\author[First]{Boli Chen}

\address[First]{Department of Electronic and Electrical Engineering, and Dynamic Systems Lab, University College London, London, WC1E 7JE, UK (e-mail: \{ruixuan.zhao.22,boli.chen\}@ucl.ac.uk).}
\address[Second]{Wolfson School of Mechanical, Electrical and Manufacturing Engineering,
    Loughborough University, 
    Loughborough, LE11 3TU, UK 
    (e-mail: g.yang@lboro.ac.uk).}
\address[Third]{School of Electrical Engineering and Computer Science, and Digital Futures, KTH Royal Institute of Technology, Stockholm, Sweden, (e-mail: nicolba@kth.se).}

\begin{abstract}
This paper proposes a novel Distributed Unknown Input Observer (DUIO) framework for state estimation in large-scale systems subject to local unknown inputs. We consider systems where outputs are measured by a network of spatially distributed sensors and inputs are introduced through multiple dispersed channels. In this framework, each local node utilizes only its local input and output measurements to estimate the maximal locally reconstructible state. Subsequently, nodes collaboratively reconstruct the whole system state via a distributed optimization algorithm that fuses these partial estimates. We provide a rigorous analysis showing that the estimation error is bounded, with the error bound explicitly dependent on the number of communication iterations per time step and strongly convexity constant determined by the system parameters. Furthermore, to counteract curvature anisotropy induced by poor conditioned system geometry, we embed a normalization step into the distributed optimization procedure. Simulation results demonstrate the effectiveness of the proposed framework and the performance improvements yielded by the normalization procedure.
\end{abstract}

\begin{keyword}
Distributed State Estimation, Discrete-time System, Unknown Inputs, Distributed Optimization.
\end{keyword}

\end{frontmatter}

\section{Introduction}
With the rapid expansion of cyber-physical systems (CPS) and the internet of things (IoT), the scale of modern engineering infrastructures, such as power grids, water distribution networks, and intelligent transportation systems, has grown significantly. In such interconnected infrastructures, real-time state estimation is critical for monitoring system status, detecting faults or cyber-attacks, and facilitating decentralized control decisions  \citep{liu2023distributed, kim2023decentralized, fioravanti2024distributed}. While centralized estimation schemes have traditionally served as the standard approach, they are increasingly ill-suited for large-scale networks due to their vulnerability to single-point failures and communication bottlenecks. Consequently, distributed state estimation (DSE) has emerged as a more reliable and scalable paradigm, enabling a network of sensors to collaboratively reconstruct the global system state using only local measurements and neighbor-to-neighbor communication \citep{rego2019distributed}.

A fundamental challenge in deploying DSE in realistic scenarios is the presence of unknown inputs, such as unmodeled dynamics, external disturbances, or malicious cyber-attacks. In a distributed setting, this problem is exacerbated by the limited observability of local nodes. A single node typically lacks access to all system inputs and may not have sufficient information to distinguish between nominal state evolution and external disturbances. Existing solutions (e.g., \cite{yang2022state}) often circumvent this difficulty by imposing stringent local rank conditions, requiring that unknown inputs be locally decoupled or reconstructed at each node. This assumption is restrictive and often impractical for sparse large-scale networks, as it essentially requires every node can individually address the unknown input, thereby undermining the cooperative nature of distributed estimation.

The challenge is further compounded in the discrete-time domain. Unlike continuous-time observers that can exploit high-gain mechanisms to achieve asymptotic convergence or sliding modes to fully reject disturbances, discrete-time designs face inherent limitations in consensus dynamics and disturbance decoupling. While geometric approaches have been proposed to address state estimation with unknown inputs, they have traditionally focused on centralized designs \citep{bhattacharyya2003observer}. Optimization-based approaches, such as distributed moving horizon estimation (MHE) (e.g., \cite{farina2010distributed}), offer a promising alternative for discrete-time systems. They naturally handle constraints and offer a flexible structure that is easier to deploy and extend to complex scenarios, including nonlinear dynamics. However, existing distributed optimization strategies typically assume the complete absence of unknown inputs or rely on strict local observability conditions. There remains a significant gap in the literature for a distributed, optimization-based framework that can robustly estimate the state of a discrete-time system in the presence of unknown inputs without relying on strong local observability.




More recently, distributed unknown input observer (DUIO) designs with asymptotic convergence guarantees have attracted attention, including \cite{yang2022state, cao2023distributed, zhang2025distributed, wu2025almost}. However, these designs focus on continuous-time systems and adopt high-gain injection to enforce consensus, which may compromise robustness against measurement or communication noise. Furthermore, to locally reject the influence of unknown inputs, these designs require a specific rank condition at each node, which is restrictive for real-world implementation. 
The framework presented by \cite{disaro2025distributed} offers an efficient single timescale architecture for discrete-time systems but is still limited by the requirement of a local rank condition for disturbance decoupling, restricting its use in networks with limited local observability. Although \cite{zhao2025bridging, zhao2025DUIO} address this limitation by relaxing the rank condition, their solutions are tailored to continuous-time dynamics and cannot be directly extended to the discrete-time domain.

Apart from the development that involves DUIO designs, distributed optimization-based estimation schemes have drawn increasing attention (e.g., \cite{wang2024distributed, yang2025state}). These designs are prone to address constraints of real IoT systems such energy consumption and limited bandwidth, and extend to nonlinear systems \cite{wang2024distributed}. However, current optimization-based designs generally assume autonomous systems or scenarios where all inputs are known to all nodes, thereby failing to account for unknown input channels.


Despite all this progress, there remains a gap for a generalizable discrete-time DSE framework that handles unknown inputs without requiring restrictive local rank conditions. This paper addresses this gap via a distributed optimization approach. The main contributions of this paper are summarized as follows: 1) we introduce a novel distributed optimization-based state estimation scheme for discrete-time LTI systems that explicitly accounts for unknown inputs and eliminates the restrictive rank condition; 2) Compared with existing optimization-based DSE frameworks \cite{wang2024distributed}, the proposed method addresses unknown inputs and achieves delay-free estimates. In contrast to \cite{yang2025state}, our proposed method requires sharing only local estimated states, thereby eliminating the need to transmit local gradient information; 3) we propose a normalization step to circumvent the curvature anisotropy induced by poor conditioned system geometry.

    


The remainder of this paper is organized as follows. In Section~\ref{sec:problem statement}, we formulate the problem of distributed state estimation with local unknown inputs and define the discrete-time DUIO. Section~\ref{sec:DUIO} presents our novel DUIO design based on a distributed optimization method and establishes its convergence properties. Finally, we validate the effectiveness of the proposed design via a numerical example in Section~\ref{sec:simulate} and provide concluding remarks in Section~\ref{sec:conclude}.

\paragraph*{Notation}
The sets of real numbers are denoted by $\mathbb{R}$. The identity matrix of size $n$ is denoted by $I_{n}$. The all-zeros vector of size $n$ is denoted by $\mathbf{0}_{n}$. $\mathbf{1}_n$ is defined as a $n$-dimensional vector with all $1$ elements. The symbol $\|\cdot\|$ denotes the standard Euclidean norm, and $\otimes$ denotes the Kronecker product. $\kappa(M)$ denotes the spectrum of $M$ and $\lambda_{\mathrm{min}}(M)$ represents the minimal eigenvalue of $M$.\\
The communication topology of a network is represented by an undirected graph denoted by $\mathcal{G} = (\mathbf{N},\mathcal{E},\mathcal{A})$, where $\mathbf{N}= \{1,2,\dots,N\}$ is a finite nonempty set of nodes of the graph (describing the networked observer containing $N$ local sensors), $\mathcal{E}\subseteq \mathbf{N} \times \mathbf{N}$ represents the edges of the graph (describing communication among the nodes), and $\mathcal{A}=[a_{ij}] \in \mathbb{R}^{N\times N}$ is the adjacency matrix, where $a_{ij}=a_{ji}=1$ if there exists an edge between node $i$ and node $j$, and $a_{ij}=a_{ji}=0$ otherwise. The set $\mathcal{N}_i\coloneq\{j|(j,i)\in\mathcal{E}\}$ denotes the neighboring nodes of node $i$. $d_i=\sum_{i=1}^N a_{ij}$ is the degree of node $i$. $\mathcal{D}=\mathrm{diag}_{i\in\mathbf{N}}(d_i)$ denotes the degree matrix of graph $\mathcal{G}$. $\mathcal{L}=\mathcal{D}-\mathcal{A}$ is defined as the Laplacian matrix associated with graph $\mathcal{G}$. For a connected graph, $\lambda_2(\mathcal{L})$ is the smallest positive eigenvalue of $\mathcal{L}$.

\section{Problem Formulation}
\label{sec:problem statement}
Consider the following discrete-time linear time-invariant (LTI) system
\begin{subequations}\label{eq:sys}
    \begin{align}
        &x(t+1) = Ax(t) + Bu(t),\\
        &y(t) = C x(t),
    \end{align}
\end{subequations}
where ${A\in \mathbb{R}^{n\times n}}$, ${B\in \mathbb{R}^{n\times m}}$ and $C\in\mathbb{R}^{p\times n}$ are system matrices, ${x \in \mathbb{R}^n}$ is the state vector, ${u\in \mathbb{R}^m}$ is the control input, and $y\in\mathbb{R}^p$ is the output measurement.\\
We consider the case of a large-scale system, where the output is measured by a group of sensors:
    \begin{align}
        y_i(t)=C_i x(t),\ i\in\mathbf{N}
    \end{align}
where $y_i\in\mathbb{R}^{p_i}$, $\sum_{i=1}^N p_i=p$ with $C=\mathrm{col} (C_1,C_2,\cdots,C_N)$, and $y=\mathrm{col}(y_1,y_2,\cdots,y_N)$.\\
At each node $i$, the input term of the system can be partitioned by
    \begin{equation}\label{eq:inputdecomp}
    Bu(t) = B_i u_i(t) +\bar{B}_i \bar{u}_i(t),
    \end{equation}
with $B_i \in \mathbb{R}^{n\times l_i}$, $\bar{B}_i \in \mathbb{R}^{n\times (m - l_i)}$, where $m-l_i \leq p_i \leq n$. $u_i \in \mathbb{R}^{l_i}$ and $\bar u_i \in \mathbb{R}^{m - l_i}$ are the known and unknown input signals for node $i$, respectively. 
The unknown input represents all exogenous and unmeasured signals acting on the plant, including disturbances, model discrepancies, and any locally injected inputs that are not available to the observer.  
\begin{assum}\label{asm:connected}
    The undirected graph $\mathcal{G} = (\mathbf{N},\mathcal{E},\mathcal{A})$ that describes the communication connection among the local DUIOs is connected.    
\end{assum}
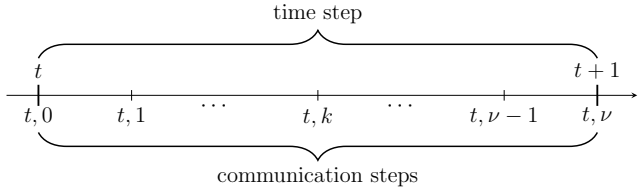
\begin{figure}[htp] 
\centering
\scalebox{0.7}{
    \begin{tikzpicture}[
    >=stealth,              
    x=1.5cm,            
    font=\large,
    tick style/.style={thin} 
]

\draw[->] (-0.4, 0) -- (7.5, 0);

\pgfmathsetmacro{\xT}{0}
\pgfmathsetmacro{\xTplusOne}{7}

\draw[line width=1pt] (\xT, 0.2) -- (\xT, -0.2);
\node[above] at (\xT, 0.2) {$t$};
\draw[line width=1pt] (\xTplusOne, 0.2) -- (\xTplusOne, -0.2);
\node[above] at (\xTplusOne, 0.2) {$t+1$};

\pgfmathsetmacro{\stepsize}{(\xTplusOne - \xT) / 6} 

\foreach \i in {0, 1, 3}
{
    \pgfmathsetmacro{\xval}{\xT + \i * \stepsize}
    \draw[tick style] (\xval, 0.1) -- (\xval, -0.1); 
    \ifnum\i=0
        \node[below] at (\xval, -0.1) {$t, 0$};
    \fi
    \ifnum\i=1
        \node[below] at (\xval, -0.1) {$t, 1$};
    \fi
    \ifnum\i=3
        \node[below] at (\xval, -0.1) {$t, k$};
    \fi
}

\node at (\xT + 1 * \stepsize + 0.9 * \stepsize, -0.2) {$\cdots$};

\node at (\xT + 3 * \stepsize + 0.9 * \stepsize, -0.2) {$\cdots$};

\pgfmathsetmacro{\xnuMinusOne}{\xT + 5 * \stepsize}
\pgfmathsetmacro{\xnu}{\xTplusOne}

\draw[tick style] (\xnu-7/6, 0.1) -- (\xnu-7/6, -0.1);
\node[below] at (\xnu-7/6, -0.1) {$t, \nu-1$};

\node[below] at (\xnu, -0.1) {$t, \nu$};

\draw [
    decorate,
    decoration={
        brace,
        amplitude=15pt,      
        mirror,             
        raise=20pt          
    },
    line width=0.8pt
] (\xT, 0) -- (\xTplusOne, 0)
node [midway, above=35pt] {\text{time step}}; 

\draw [
    decorate,
    decoration={
        brace,
        amplitude=15pt,
        raise=20pt          
    },
    line width=0.8pt
] (\xT, 0) -- (\xTplusOne, 0)
node [midway, below=35pt] {\text{communication steps}}; 

\end{tikzpicture}}\\[-1.5ex]
    \caption{Time step of system dynamics and communication steps of the algorithm. $\nu$ is the total communication steps between $t$ and $t+1$.}
    \label{fig:time_step}
\end{figure}
The objective is to design a DUIO $\{\mathcal{O}_i\}_{i\in\mathbf{N}}$ to jointly reconstruct the state vector $x$ at each node, while each node $i\in\mathbf{N}$, has only access to its local known control input $u_i$, local measurement $y_i$ and the shared information from its neighboring nodes. The underlying communication network is defined by $\mathcal{G} = (\mathbf{N},\mathcal{E},\mathcal{A})$, and information sharing occurs between two time steps as illustrated in Fig.~\ref{fig:time_step}, which is widely used in the existing optimization-based DSE framework \citep{yang2025state,wang2024distributed}.

\begin{defn}\label{def:DUIO}
    Let ${x}_i$ be the estimate of $x$ produced by observer node $\mathcal{O}_i$. The communication topology among the observer nodes $\{\mathcal{O}_i\}_{i\in\mathbf{N}}$ is modeled by $\mathcal{G} = (\mathbf{N},\mathcal{E},\mathcal{A})$, and each sampling interval admits $\nu$ communication iterations.  The collection $\{\mathcal{O}_i\}_{i\in\mathbf{N}}$ is defined to be a discrete-time DUIO for system \eqref{eq:sys} if, as $t\to \infty$, the estimation error at every node satisfies $\|{x}_i(t) - x(t)\|$ being bounded and these bounds converges to zero as the communication frequency $\nu \to \infty$.

\end{defn}
\section{Discrete-time DUIO Algorithm Design}\label{sec:DUIO}
\subsection{Local state estimation}\label{subsec:red-od:UIO}
\begin{figure}[htp] 
\centering
\scalebox{0.9}{
    \begin{tikzpicture}[
    >=latex, 
    font=\small, 
    node distance=1.5cm
]

    \def\xgap{2.8}   
    \def\ygap{1.8}   
    
    
    \node (W1) at (0, 2*\ygap) {$\mathscr{W}^*_{g,i}$};
    \node (W2) at (\xgap, 2*\ygap) {$\mathscr{W}^*_{g,i}$};
    
    \node (X1) at (0, \ygap) {$\mathscr{X}$};
    \node (X2) at (\xgap, \ygap) {$\mathscr{X}$};
    
    \node (Q1) at (0, 0) {$\mathscr{X} / \mathscr{W}^*_{g,i}$};
    \node (Q2) at (\xgap, 0) {$\mathscr{X} / \mathscr{W}^*_{g,i}$};
    
    \node (U) at (-2.0, \ygap) {$\bar{\mathscr{U}}_i$};

    
    \draw[->] (W1) -- (W2) node[midway, above] {$A_{L_i}|\mathscr{W}^*_{g,i}$};
    
    \draw[->] (X1) -- (X2) node[midway, above] {$A_{L_i}$};
    
    \draw[->] (Q1) -- (Q2) node[midway, above] {$A_{L_i}|\mathscr{X}/\mathscr{W}^*_{g,i}$};
    \node[below=2pt] at ($(Q1)!0.5!(Q2)$) {spectrum good}; 

    
    \draw[->] (W1) -- (X1) node[midway, right] {$W^*_{g,i}$};
    \draw[->] (X1) -- (Q1) node[midway, right] {$P_{W^*_{g,i}}$};
    
    \draw[->] (W2) -- (X2) node[midway, right] {$W^*_{g,i}$};
    \draw[->] (X2) -- (Q2) node[midway, right] {$P_{W^*_{g,i}}$};

    \draw[->] (U) -- (X1) node[midway, above] {$\bar{B}_i$};
    \draw[->, dashed] (U) -- (Q1) node[midway, below left=-2pt] {$0$};

    
    \def\braceX{\xgap + 1.2} 
    \def\braceAmp{8pt}
    
    \draw[decorate, decoration={brace, amplitude=\braceAmp}, thick]
        (\braceX, 2*\ygap + 0.2) -- (\braceX, \ygap - 0.2)
        node[midway, right=10pt, align=left] {Obtained by\\distributed\\optimization};

    \draw[decorate, decoration={brace, amplitude=\braceAmp}, thick]
        (\braceX, \ygap - 0.6) -- (\braceX, -0.2)
        node[midway, right=10pt, align=left] {Obtained by\\local state\\estimation};

\end{tikzpicture}}\\[-1.5ex]
    \caption{Commutative diagram of $\W_{g,i}^*$ decomposition at Node $i$.}
    \label{fig:Wg_decompose}
\end{figure}
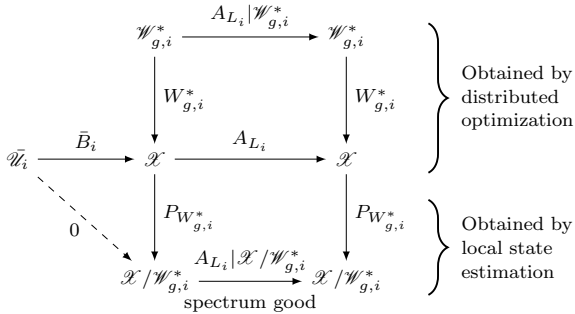
In this subsection, we design for each node $i$ a local state estimation that filters out the effect of the local unknown inputs while preserving the information that is still reconstructible from the local measurements. 
The construction follows the geometric approach in \cite{zhao2025DUIO,zhao2025bridging} and is briefly recalled here for completeness.\\
Fig.~\ref{fig:Wg_decompose} illustrates the basic idea. The state space $\X$ can be decomposed into a subspace, $\W_{g,i}^*$, that is directly corrupted by the unknown inputs and its complementary quotient space, $\X/\W_{g,i}^*$, on which the dynamics can be shaped to have a ``good'' (e.g, stable) spectrum for estimation. Intuitively, the goal is to project the state onto the largest subspace that is unaffected by the unknown inputs $\bar{u}_i$ at each individual node, and to design an observer that evolves on this quotient space.\\
We first recall an important lemma from \cite{zhao2025DUIO}.
\begin{lem}\label{lem:Wg}
    Let $P_{W_i^*}:\X \to \X/\W_i^*$ be the canonical projection, where $\W_i^*$ is the infimal $(C_i,A)$-invariant subspace containing $\image{\bar{B}_i}$.
    Then, the subspace $\W_{g,i}^*$ defined as
    \begin{equation}\label{eq:Wg}
        \W_{g,i}^*
        \coloneqq 
        P_{W_i^*}^{-1}\bar{\X}_{b,i}^* 
    \end{equation}
    is the infimal $(C_i, A)$-invariant subspace containing $\image \bar{B_i}$, while enabling the assignment of $\kappa(A_{L_i}|\X/\W_{g,i}^*)$\footnote{For simplicity, we use $A_{L_i}$ to represent $A+L_iC_i$.} into the partial complex plane $\mathbb{C}_g$, where $\bar{\X}_{b,i}^*$ is the sub-quotient space corresponding to unstable invariant zeros of system $(C_i,A,\bar B_i)$\footnote{Readers may refer to \cite[Equation~(7)]{zhao2025bridging} for the detailed definition of $\bar{\X}_{b,i}^*$.}.
\end{lem}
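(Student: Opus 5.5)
The plan is to use the standard geometric-control machinery for $(C_i,A)$-invariant (conditioned invariant) subspaces and their duality with controlled invariant subspaces. Note that the statement must be read as \emph{the infimal} $(C_i,A)$-invariant subspace containing $\image\bar B_i$ \emph{among those} for which $\kappa(A_{L_i}|\X/\W)$ is assignable to $\mathbb{C}_g$. The proof therefore splits into three parts. First, $\W_{g,i}^*$ is itself $(C_i,A)$-invariant and contains $\image\bar B_i$. Second, some $L_i$ places the quotient spectrum in $\mathbb{C}_g$. Third, every other subspace with these two properties contains $\W_{g,i}^*$.

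\textbf{Step 1: structure.} I would start from $\W_i^*$. It is the limit of the recursion $\W^0=\image\bar B_i$, $\W^{k+1}=\image\bar B_i+A(\W^k\cap\kernel C_i)$, so it is conditioned invariant. Hence there is an $L$ with $(A+LC_i)\W_i^*\subseteq\W_i^*$. On $\X/\W_i^*$ the induced map splits into two pieces:
\begin{itemize}
\item the freely assignable part, given by the unobservable-from-$C_i$ directions relative to $\W_i^*+\ker C_i$;
\item a fixed part, whose spectrum is exactly the set of invariant zeros of $(C_i,A,\bar B_i)$.
\end{itemize}
The fixed part contains the sub-quotient $\bar\X_{b,i}^*$ associated with the zeros outside $\mathbb{C}_g$; this subspace is invariant under the induced map for every friend $L$. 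Pulling it back, $\W_{g,i}^*=P_{W_i^*}^{-1}\bar\X_{b,i}^*\supseteq\W_i^*\supseteq\image\bar B_i$. Invariance of $\bar\X_{b,i}^*$ under the induced map, together with $\W_i^*$ being $(A+LC_i)$-invariant, gives $(A+LC_i)\W_{g,i}^*\subseteq\W_{g,i}^*$. So $\W_{g,i}^*$ is $(C_i,A)$-invariant.

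\textbf{Step 2: spectral assignability.} The quotient $\X/\W_{g,i}^*\cong(\X/\W_i^*)/\bar\X_{b,i}^*$ keeps the assignable part and only the invariant zeros lying in $\mathbb{C}_g$. I would argue this by the dual statement: $\W^\perp$ is $(A^\top,\image C_i^\top)$-controlled invariant, and the fixed poles of the controlled-invariant problem are the invariant zeros. Choosing $L_i$ within the friends of $\W_{g,i}^*$ then places $\kappa(A_{L_i}|\X/\W_{g,i}^*)$ in $\mathbb{C}_g$.

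\textbf{Step 3: infimality.} Let $\mathcal{S}$ be any $(C_i,A)$-invariant subspace containing $\image\bar B_i$ whose quotient spectrum is assignable to $\mathbb{C}_g$. By infimality of $\W_i^*$ we have $\mathcal{S}\supseteq\W_i^*$. It remains to show $P_{W_i^*}\mathcal{S}\supseteq\bar\X_{b,i}^*$. Suppose it does not. Then the intersection of $\bar\X_{b,i}^*$ with a complement of $P_{W_i^*}\mathcal{S}$ carries an eigenvalue outside $\mathbb{C}_g$. Because the zeros are fixed for all friends, this eigenvalue would appear in $\kappa(A_{L}|\X/\mathcal{S})$ for every friend $L$ of $\mathcal{S}$, which is a contradiction.

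\textbf{Main obstacle.} This last step is the one I expect to be hardest. It needs a clean invariance/quotient argument, for which I would use the dual form: $\mathcal{S}^\perp\subseteq\mathcal{V}^*$, combined with the known result that the $\mathbb{C}_g$-stabilizable controlled invariant supremum equals $\mathcal{V}^*_g$. Following \cite{zhao2025bridging}, I would establish $\W_{g,i}^{*\perp}=\mathcal{V}_g^*$ for the dual triple, so that Step 3 reduces to the classical supremality result for $\mathcal{V}_g^*$.
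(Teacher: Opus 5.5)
The paper gives no proof of this lemma. It recalls it from \cite{zhao2025DUIO}, so there is no in-paper argument to compare against. The natural proof is the output-injection dual of the classical Wonham/Basile--Marro characterization of the supremal internally stabilizable controlled invariant subspace $\mathcal{V}_g^*$, and your Steps~1--2 and your fallback follow that structure. Your reading of the statement is the right one: infimal among $(C_i,A)$-invariant subspaces containing $\image\bar B_i$ whose quotient spectrum is assignable to $\mathbb{C}_g$. In Step~1, however, you have the two pieces swapped. Let $\mathscr{N}_i^*$ be the largest $(A+L_0C_i)$-invariant subspace contained in $\W_i^*+\kernel C_i$, with $L_0$ any friend of $\W_i^*$. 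The relatively unobservable directions $\mathscr{N}_i^*/\W_i^*$ carry the \emph{fixed} map whose spectrum is the set of invariant zeros. The freely assignable part is the quotient $\X/\mathscr{N}_i^*$.

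There is a genuine gap in the direct contradiction argument of Step~3. The map induced on $\mathscr{N}_i^*/\W_i^*$ is independent of $L$ only for friends of $\W_i^*$. A friend $L$ of $\mathcal{S}$ need not leave $\W_i^*$ or $\mathscr{N}_i^*$ invariant (take $\mathcal{S}=\X$). So ``the zeros are fixed for all friends'' says nothing about $\kappa((A+LC_i)|\X/\mathcal{S})$. Also, the intersection of $\bar\X_{b,i}^*$ with a complement of $P_{W_i^*}\mathcal{S}$ is not invariant under anything, so it carries no eigenvalue. Two ingredients repair this.

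(i) Common friend. Let $L_1$ be a friend of $\mathcal{S}$ with $\kappa((A+L_1C_i)|\X/\mathcal{S})\subset\mathbb{C}_g$. For $w\in\W_i^*\subseteq\mathcal{S}$ one has $(L_1-L_0)C_iw=(A+L_1C_i)w-(A+L_0C_i)w\in\mathcal{S}$. Set $L=L_1+M$, with $M=-(L_1-L_0)$ on $C_i\W_i^*$ and $M=0$ on a complement. Then $\image M\subseteq\mathcal{S}$, $L$ is a friend of both $\W_i^*$ and $\mathcal{S}$, and it induces the same map on $\X/\mathcal{S}$ as $L_1$.

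(ii) Invariant subspace argument. With this $L$, the subspace $\mathcal{S}\cap\mathscr{N}_i^*\supseteq\W_i^*$ is $(A+LC_i)$-invariant. Moreover $\mathscr{N}_i^*/(\mathcal{S}\cap\mathscr{N}_i^*)\cong(\mathscr{N}_i^*+\mathcal{S})/\mathcal{S}$ is an invariant subspace of $\X/\mathcal{S}$. Hence the fixed zero map modulo the invariant subspace $(\mathcal{S}\cap\mathscr{N}_i^*)/\W_i^*$ has spectrum in $\mathbb{C}_g$. That subspace must therefore contain the bad modal subspace $\bar\X_{b,i}^*$; otherwise its image in the quotient would be invariant with eigenvalues both outside and inside $\mathbb{C}_g$. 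This gives $\W_{g,i}^*=P_{W_i^*}^{-1}\bar\X_{b,i}^*\subseteq\mathcal{S}\cap\mathscr{N}_i^*\subseteq\mathcal{S}$.

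This is exactly the dual of the classical supremality proof for $\mathcal{V}_g^*$. Your fallback through $\W_{g,i}^{*\perp}=\mathcal{V}_g^*$ for the dual triple is therefore sound. As written, though, the ``suppose not'' argument does not establish infimality and should be replaced by (i)--(ii).
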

In words, $\W_{g,i}^*$ is the smallest $(C_i,A)$-invariant subspace that contains all directions affected by the unknown inputs, and such that the dynamics restricted to the quotient space $\X/ \W_{g,i}^*$ can be immune to the unknown inputs $\bar u_i$ while made well-conditioned (i.e., $\kappa(A_{L_i}|\X/\W_{g,i}^*) \in \mathbb{C}_g$). This makes it possible to separate the disturbance-corrupted parts from those that can be robustly estimated.


We then consider the following local estimation at each node:
 \begin{equation}\label{eq:si}
    s_i(t+1) = \bar A_{L_i}  s_i(t) - P_{W_{g,i}^*}L_i y_i(t) + P_{W_{g,i}^*}B_i u_i(t)
\end{equation}
with $\kappa(\bar A_{L_i}) \in \mathbb{C}_g$. 
Let $ T_i \coloneqq\matrices{P_{W_{g,i}^*} \\ C_i}$, and $\xi_i(t) \coloneq \matrices{s_i(t)\\y_i(t)}$
The next lemma states its convergence property.
\begin{lem}\label{lemma:s_i}
     With local estimator \eqref{eq:si}, the vector $\xi_i(t)$ asymptotically converges to $T_ix(t)$ as $t\rightarrow\infty$.
\end{lem}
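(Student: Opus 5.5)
The plan is to show that $s_i(t)$ tracks the projected state $P_{W_{g,i}^*}x(t)$. The claim for $\xi_i(t)$ then follows immediately, because its second block $y_i(t)=C_ix(t)$ is exact by construction. We therefore define the error $e_i(t)\coloneqq s_i(t)-P_{W_{g,i}^*}x(t)$ and aim to prove that $e_i(t)\to 0$.

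\textbf{Step 1: the induced map on the quotient.} Lemma~\ref{lem:Wg} states that $\W_{g,i}^*$ is $(C_i,A)$-invariant. Hence there is an output injection $L_i$ such that $A_{L_i}\W_{g,i}^*\subseteq\W_{g,i}^*$. Consequently $A_{L_i}$ induces a well-defined map $\bar A_{L_i}\coloneqq A_{L_i}|\X/\W_{g,i}^*$ on the quotient, and it satisfies the commutation relation
\begin{equation*}
P_{W_{g,i}^*}A_{L_i}=\bar A_{L_i}P_{W_{g,i}^*}.
\end{equation*}
By the spectral assignability part of Lemma~\ref{lem:Wg}, $L_i$ can be chosen (within the freedom left after enforcing invariance) so that $\kappa(\bar A_{L_i})\in\mathbb{C}_g$. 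Here $\mathbb{C}_g$ is taken as the open unit disc. The same $\bar A_{L_i}$ is the one appearing in \eqref{eq:si}.

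\textbf{Step 2: the error dynamics.} Since $\image\bar B_i\subseteq\W_{g,i}^*$, we have $P_{W_{g,i}^*}\bar B_i=0$. Projecting the plant equation and using the input decomposition \eqref{eq:inputdecomp} gives
\begin{equation*}
P_{W_{g,i}^*}x(t+1)=P_{W_{g,i}^*}\bigl(A_{L_i}-L_iC_i\bigr)x(t)+P_{W_{g,i}^*}B_iu_i(t).
\end{equation*}
By the commutation relation, this equals
\begin{equation*}
\bar A_{L_i}P_{W_{g,i}^*}x(t)-P_{W_{g,i}^*}L_iy_i(t)+P_{W_{g,i}^*}B_iu_i(t).
\end{equation*}
Subtracting this from \eqref{eq:si}, the known-input and output terms cancel, and the unknown input never appears. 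The error therefore satisfies the autonomous recursion $e_i(t+1)=\bar A_{L_i}e_i(t)$.

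\textbf{Step 3: convergence.} Because the spectrum of $\bar A_{L_i}$ lies strictly inside the unit disc, $e_i(t)=\bar A_{L_i}^te_i(0)\to0$ exponentially, for any initial condition. Stacking the components, $\xi_i(t)-T_ix(t)=\col(e_i(t),0)\to0$, which proves the lemma.

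\textbf{Main obstacle.} The delicate point is Step 1. We must justify that a single $L_i$ simultaneously makes $\W_{g,i}^*$ invariant under $A+L_iC_i$ and assigns the spectrum of the induced quotient map into $\mathbb{C}_g$. I would rely directly on the statement of Lemma~\ref{lem:Wg}, which asserts both properties. The construction in \cite{zhao2025bridging} shows that the only uncontrollable part of the quotient spectrum consists of the stable invariant zeros that remain after $\bar{\X}_{b,i}^*$ is removed.

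A further point is that $P_{W_{g,i}^*}$ must be realized as a concrete matrix whose kernel is $\W_{g,i}^*$, so that the quotient coordinates and the matrix $\bar A_{L_i}$ are well defined. I would fix this by choosing a basis complementary to $\W_{g,i}^*$, and then verify the commutation relation in those coordinates.
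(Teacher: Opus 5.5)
Your proof is correct and matches the argument the paper relies on: the paper omits the proof and defers to the error-dynamics derivation of its cited geometric works, and your version is that derivation carried out in discrete time, using $P_{W_{g,i}^*}\bar B_i=0$ and $P_{W_{g,i}^*}A_{L_i}=\bar A_{L_i}P_{W_{g,i}^*}$ to get $e_i(t+1)=\bar A_{L_i}e_i(t)$, while the $y_i$-block of $\xi_i(t)-T_ix(t)$ vanishes identically. You read $\bar A_{L_i}$ as the induced quotient map and $\mathbb{C}_g$ as the open unit disc; both readings are needed for the lemma to hold, and you correctly take the existence of a single $L_i$ delivering invariance and quotient-spectrum assignment directly from Lemma~\ref{lem:Wg}.
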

The proof of Lemma~\ref{lemma:s_i} is omitted as it follows directly from the derivations in \cite{zhao2025DUIO,zhao2025bridging}.

\subsection{Data Fusion via Distributed Optimization}
Under the asymptotic convergence $\xi_i(t)\to T_ix(t)$ (as $t\to\infty$), the sensor nodes are capable of collaboratively reconstructing the global system state $x(t)$ using the distributed optimization-based algorithm proposed in this subsection. As a prerequisite, we introduce the following joint geometric condition.
\begin{assum}\label{asm:geo_con}
The canonical projections $P_{W_{g,i}^*}:\X\rightarrow\X/\W_{g,i}^*$ and the output matrices $C_i$ of all the node $i$ have the following joint property
\begin{equation}\label{con:con_for_dt_system}
        \bigcap_{i=1}^N \kernel \matrices{P_{W_{g,i}^*} \\ C_i} = 0.
\end{equation}
\end{assum}
\begin{rem}
    Assumption~\ref{asm:geo_con} guarantees global reconstructability from all local projections. In particular, it ensures that no nonzero state is simultaneously invisible to both the reduced-order estimator constructed in Subsection~\ref{subsec:red-od:UIO} and the direct local measurement across all nodes. 
    Hence, Assumption~\ref{asm:geo_con} replaces the restrictive per-node rank condition commonly imposed in existing DUIO designs. This rank condition is formulated by $\rank (C_i \bar B_i) = \rank \bar B_i, \forall i \in \mathbf{N}$, which requires each node to be individually capable of completely decoupling its unknown inputs. In contrast, our approach requires only the joint property in \eqref{con:con_for_dt_system} without enforcing such rigid structural conditions locally. This feature substantially broadens the applicability of the proposed DUIO design, particularly for networks in which individual nodes have limited sensing capabilities.
\end{rem}

Based on the local auxiliary estimation $\xi_i$, at time step $t$, we can formulate the distributed state estimation problem with local unknown inputs as follows:
\begin{equation}\label{eq:init_pro}
    \begin{aligned}
        &\min_{{x}_i^t\in\mathbb{R}^n, i\in\mathbf{N}} \sum_{i\in\mathbf{N}}f_i(x_i^t)\\
        &\qquad\mathrm{s.t.}\quad {x}_i^t={x}_j^t,\ (i,j)\in\mathcal{E},
    \end{aligned}
\end{equation}
where $f_i(x_i^t)$ is defined as \begin{equation}
    f_i(x_i^t)\coloneq\frac{1}{2}\| T_i {x}_i^t-\xi_i^t\|^2
\end{equation}
with $\xi_i^t \coloneq\xi_i(t)$. For simplicity, let
\begin{equation}
    F(X^t)\coloneq \sum_{i\in\mathbf{N}}f_i(x_i^t),
\end{equation}
where $X^t=\col(x_1^t,x_2^t,\cdots,x_N^t)$. Based on the definitions, for a decision variable $\mathtt{x}$, we can observe that $f_i(\mathtt{x})$ is $K_i$-Lipschitz continuous with $K_i=\|T_i^\top T_i\|$, and $\bm{f}(\mathtt{x})\coloneq\sum_{i=1}^N f_i(\mathtt{x})$ is $\mu$-strongly convex with $\mu = \lambda_{\mathrm{min}}\left(\sum_{i=1}^N T_i^\top T_i\right)$ and has inequality
\begin{equation}\label{eq:strong_convexity}
    \sum_{i=1}^N f_i(\mathtt{x})-\sum_{i=1}^N f_i(\mathtt{x}^\star)\ge \frac{\mu}{2}\|\mathtt{x}-\mathtt{x}^\star\|^2,
\end{equation}
where $\mathtt{x}^\star$ is the optimal solution of $\min_{\mathtt{x}}\ \bm{f}(\mathtt{x})$.
Under Assumption~\ref{asm:connected}, the constraint in \eqref{eq:init_pro} enforces
    \begin{equation*}
        x_1^t=x_2^t=\cdots=x_N^t.
    \end{equation*}
    Let $\mathtt{x}^t=x_1^t=x_2^t=\cdots=x_N^t$ denote this common variable. Then \eqref{eq:init_pro} reduces to the following unconstrained optimization problem
\begin{equation}\label{eq:pro_1}
        \begin{aligned}
        &\min_{\mathtt{x}^t\in\mathbb{R}^n}\ \bm{f}(\mathtt{x}^t)=\sum_{i\in\mathbf{N}}f_i(\mathtt{x}^t)=\sum_{i\in\mathbf{N}}\frac{1}{2}\| T_i  \mathtt{x}^t-\xi_i^t\|^2.
    \end{aligned}
\end{equation}
\begin{lem}
    Under Assumptions~\ref{asm:connected} and~\ref{asm:geo_con}, if there exist an algorithm solving problem \eqref{eq:init_pro} whose solution is equivalent to optimization problem \eqref{eq:pro_1}, then the outputs of this algorithm constitute a DUIO satisfying Definition~\ref{def:DUIO}.
\end{lem}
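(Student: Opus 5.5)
The plan is to show that the minimizer of \eqref{eq:pro_1} tracks $x(t)$. The errors of the local estimators from Lemma~\ref{lemma:s_i} then pass to the fused estimate through a Lipschitz map. The finite-$\nu$ behaviour of the algorithm can be handled by the hypothesis that its output approximates that minimizer.

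First, under Assumption~\ref{asm:geo_con}, the matrix $\mathcal{T}\coloneq\col(T_1,\dots,T_N)$ has trivial kernel. Hence $\sum_{i}T_i^\top T_i=\mathcal{T}^\top\mathcal{T}$ is positive definite, i.e.\ $\mu>0$, and $\bm f$ is strongly convex with a unique minimizer. Setting the gradient to zero gives the closed form
\begin{equation*}
\mathtt{x}^{\star}(t)=\Big(\sum_{i\in\mathbf{N}}T_i^\top T_i\Big)^{-1}\sum_{i\in\mathbf{N}}T_i^\top\xi_i^t .
\end{equation*}
Substituting $\xi_i^t=T_ix(t)+e_i(t)$ with $e_i(t)\coloneq\xi_i(t)-T_ix(t)$ yields
\begin{equation*}
\mathtt{x}^\star(t)-x(t)=\Big(\sum_i T_i^\top T_i\Big)^{-1}\sum_i T_i^\top e_i(t),
\end{equation*}
so that
\begin{equation*}
\|\mathtt{x}^\star(t)-x(t)\|\le \mu^{-1}\sum_i\|T_i\|\,\|e_i(t)\|.
\end{equation*}
By Lemma~\ref{lemma:s_i}, $e_i(t)\to0$, and therefore $\mathtt{x}^\star(t)-x(t)\to0$. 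The error dynamics are exponentially stable, since $\kappa(\bar A_{L_i})\in\mathbb{C}_g$ and the unknown input does not enter them. Consequently $e_i$ is bounded for all $t$, not only asymptotically.

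Next, I relate the algorithm's outputs $x_i(t)$ to $\mathtt{x}^\star(t)$. Under Assumption~\ref{asm:connected}, the consensus constraint makes \eqref{eq:init_pro} equivalent to \eqref{eq:pro_1}. The hypothesis of the lemma says the algorithm's solution coincides with that of \eqref{eq:pro_1}. I read this as follows: after $\nu$ communication iterations per time step, the residual $\|x_i(t)-\mathtt{x}^\star(t)\|\le \varepsilon_\nu(t)$ is bounded, and $\varepsilon_\nu\to0$ as $\nu\to\infty$ (in the limit the algorithm returns the exact minimizer). The triangle inequality then gives
\begin{equation*}
\|x_i(t)-x(t)\|\le \varepsilon_\nu(t)+\mu^{-1}\sum_j\|T_j\|\,\|e_j(t)\|.
\end{equation*}
Taking $\limsup_{t\to\infty}$, the second term vanishes. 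The remaining bound $\limsup_t\varepsilon_\nu(t)$ is finite and tends to zero as $\nu\to\infty$, which is exactly Definition~\ref{def:DUIO}.

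The main obstacle is making the second step rigorous, because the lemma's hypothesis ("solution equivalent to \eqref{eq:pro_1}") is informal. If it means exact equivalence, the bound is immediate and has no $\nu$-dependence, so the definition holds trivially. If the algorithm is warm-started across time steps, the target $\mathtt{x}^\star(t)$ drifts with $x(t)$, and boundedness of $\varepsilon_\nu$ requires a bounded drift $\|\mathtt{x}^\star(t+1)-\mathtt{x}^\star(t)\|$. That in turn requires bounded $x(t)$ increments, and hence bounded $x$ and $u$. I would state this as an implicit standing assumption and treat $\varepsilon_\nu$ abstractly, deferring the explicit contraction-rate bound to the later convergence analysis of the specific algorithm.
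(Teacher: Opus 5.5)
Your proposal is correct and takes essentially the paper's route. Assumption~\ref{asm:geo_con} makes $\sum_i T_i^\top T_i$ positive definite, which gives the closed-form minimizer; Lemma~\ref{lemma:s_i} drives $\xi_i^t-T_ix(t)\to 0$, so the minimizer tracks $x(t)$; and the hypothesis identifies the algorithm's outputs with that minimizer. Your explicit bound $\|\mathtt{x}^\star(t)-x(t)\|\le\mu^{-1}\sum_i\|T_i\|\,\|e_i(t)\|$ handles the time-varying target more cleanly than the paper's ``substitute the limit'' step; the paper simply adopts the exact-equivalence reading, so your $\varepsilon_\nu$ layer is an optional extension, and if you keep it, the $\limsup$ step needs $\varepsilon_\nu(t)$ bounded uniformly in $t$.
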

\begin{pf}
The first‐order optimality condition $\nabla \bm{f}(\mathtt{x}^{t\star}) = 0$ yields
\begin{equation*}
    \sum_{i=1}^N T_i ^\top\left( T_i \mathtt{x}^{t\star}-\xi_i^t\right)=0,
\end{equation*}
which can be rewritten as
\begin{equation}\label{eq:TTsum}
    \left(\sum_{i=1}^N T_i ^\top T_i \right)\mathtt{x}^{t\star} = \sum_{i=1}^N T_i ^\top\xi_i^t
\end{equation}
Assumption~\ref{asm:geo_con} states that $\bigcap_{i=1}^N\kernel{ T_i }=0$, which is equivalent to $\kernel{\col_{i\in\mathbf{N}} (T_i) }=0$. Hence, the matrix $\left(\sum_{i=1}^N T_i ^\top T_i \right)$ is positive definite and therefore invertible. Consequently, one can obtain
\begin{equation*}\label{eq:phi}
    \mathtt{x}^{t\star} = \left(\sum_{i=1}^N T_i ^\top T_i \right)^{-1}\sum_{i=1}^N T_i ^\top\xi_i^t
\end{equation*}
Moreover, Lemma~\ref{lemma:s_i} guarantees that
\begin{equation}\label{eq:tinf}
    \lim_{t\rightarrow\infty}\xi_i^t =  T_i  x(t),\ \forall i\in\mathbf{N}
\end{equation}
Substituting \eqref{eq:tinf} into \eqref{eq:TTsum} gives
\begin{equation*}
    \lim_{t\rightarrow\infty}\mathtt{x}^{t\star} =\left(\sum_{i\in\mathbf{N}} T_i ^\top T_i \right)^{-1}\sum_{i=1}^N T_i ^\top T_i  x(t)=  x(t)
\end{equation*}
Thus, each local estimate $x_i$ satisfies
\begin{equation}\label{eq:x_star}
    \lim_{t\rightarrow\infty} {x}_i(t) =\lim_{t\rightarrow\infty}\mathtt{x}^{t\star} =  x(t),\ \forall i \in\mathbf{N}
\end{equation}
which completes the proof. \hfill$\square$
\end{pf}
Subsequently, we employ a distributed optimization algorithm to solve \eqref{eq:init_pro}, where each individual node $i$ utilize its local input signals $u_i$, local measurements $y_i$ and estimates shared from its neighbors $\mathcal{N}_i$ to reconstruct the entire system state. Importantly, the solution obtained via this distributed approach is identical to that of the initial problem \eqref{eq:init_pro} and problem \eqref{eq:pro_1}. Specifically, we adapt the distributed linearized alternative direction method of multipliers \cite{aybat2017distributed}. The update steps\footnote{To distinguish between the system dynamics and the communication iterations, we use the superscript $\cdot^t$ to represent the time update $\cdot(t)$, and the square bracket $[\cdot]$ with the index $k$ to denote the inner communication iterations occurring between time steps.} for Node $i$ are defined as follows:
\begin{equation}\label{eq:opt:iter}
\left\{\begin{aligned}
    &x_i^t[k+1] = x_i^t[k] - \alpha_i\left(\nabla f\left(x_i^t[k]\right)+\phi_i^t[k]+\psi_i^t[k]\right),\\
    &\phi_i^t[k+1] = \frac{\gamma}{2}\sum_{i=1}^N a_{ij}\left(x_i^t[k+1]-x_j^t[k+1]\right),\\
    &\psi_i^t[k+1] = \psi_i^t[k] + \phi_i^t[k+1],
\end{aligned}\right.
\end{equation}
where $\gamma>0$ is the penalty parameter, and $\alpha_i$ is the step size given by
\begin{equation}\label{eq:step_size}
    \alpha_i = \frac{1}{K_i + \gamma d_i},
\end{equation}
where $K_i$ is the Lipschitz constant of $f_i(\cdot)$ and $d_i$ is degree of node $i$.

Then, we can build the complete discrete-time DUIO algorithm shown in Algorithm~\ref{alg:my_alg}.
At each time step $t$, we first propagate the local reduced order observer \eqref{eq:si} to obtain $s_i(t)$, and then run $\nu$ rounds of the distributed optimization iterations \eqref{eq:opt:iter} to solve \eqref{eq:init_pro}.
Furthermore, the convergence properties are investigated in Theorem~\ref{thm:main}.
\begin{algorithm2e}[htp]
\caption{The algorithm of distributed state estimation with local unknown inputs at node $i$.}
\label{alg:my_alg}
\KwIn{step size $\alpha_i$, penalty parameter $\gamma$, and communication frequency $\nu$.
}
Collect local measurement $y_i(0)$ and local input $u_i(0)$\\
Initialize local estimate $s_i(0)$.\\
\For{$t=1,2,3,\cdots$}{
    Calculate local reduced-order estimate
    $s_i(t) = \bar A_{L_i}  s_i(t-1) - P_{W_{g,i}^*}L_i y_i(t-1) + P_{W_{g,i}^*}B_i u_i(t-1)$.\\
    Collect $y_i(t)$ and $u_i(t)$\\
    Get $\xi^t_i=\begin{bmatrix}
        s_i(t)^\top & y_i(t)^\top
    \end{bmatrix}^\top$.\\
    Initialize ${x}_{i}^t[0]=\mathbf{0}_n$, $\phi_i^t[0]=\mathbf{0}_n$, $\psi_i^t[0]=\mathbf{0}_n$.\\
    $k \gets 1$\\
    \While{$k \le \nu$}{
    Calculate ${x}_i^t[k]$
    \begin{align*}
        {x}_i^t[k] = &\left(I_n\!-\!\alpha_i T_i^\top T_i\right){x}_i^t[k-1] \\&+ \alpha_i\left( T_i^\top\xi_i^t - \phi_i^t[k-1] - \psi_i^t[k-1]\right)
    \end{align*}\\
    Broadcast $x_i^t[k]$ to all neighbors $j\in\mathcal{N}_i$\\
    Receive $x_j^t[k]$ from all neighbors $j\in\mathcal{N}_i$\\
    Update $\phi_{i}^t[k]$ and $\psi_i^t[k]$
    \begin{equation*}
    \begin{aligned}
     &\phi_i^t[k] = \frac{\gamma}{2}\sum_{i=1}^N a_{ij}\left(x_i^t[k]-x_j^t[k]\right)\\
     &\psi_i^t[k] = \psi_i^t[k-1] + \phi_i^t[k]
    \end{aligned}
    \end{equation*}\\
    $k \gets k + 1$
    }
    Get the estimated system state
    \begin{equation*}
        {x}_i(t) = {x}_i^t[\nu]
    \end{equation*}
}
\KwOut{Estimated state sequence $\{{x}_i(t)\}_{t\ge 1}$}
\end{algorithm2e}
\begin{thm}\label{thm:main}
    Under Assumption~\ref{asm:connected} and~\ref{asm:geo_con}, as $t\rightarrow\infty$, the estimated states $x_i(t)=x_i^t[\nu]$, $i\in\mathbf{N}$, generated by the proposed discrete-time DUIO carried by Algorithm~\ref{alg:my_alg} satisfies  
       \begin{align}
        \text{1)}\   \frac{1}{\nu}\|\sum_{k=1}^\nu(x_i^t[k]&-x(t))\|\le \frac{1}{\sqrt{\nu\mu}}\left(\sum_{i=1}^N\frac{1}{\alpha_i}\right)^{\frac{1}{2}}\|x(t)\|\notag\\
        &+\frac{1}{\sqrt{\nu\lambda_2(\mathcal{L})}}\left(\frac{2}{\gamma}+\frac{1}{2}\left(\sum_{i=1}^N \frac{1}{\alpha_i}\|x(t)\|^2\right)\right),\label{eq:thm}
       \end{align}
    2) $x_i(t)$ exactly converges to $x(t)$ if $\nu\rightarrow \infty$.
\end{thm}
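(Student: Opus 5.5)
We prove 1) by recasting the inner loop \eqref{eq:opt:iter} at a fixed time $t$ as the distributed linearized ADMM of \cite{aybat2017distributed} for the consensus problem \eqref{eq:init_pro}. We then import its ergodic $O(1/\nu)$ rate and convert it into a bound on the averaged iterate error via strong convexity \eqref{eq:strong_convexity}. Throughout we work in the asymptotic regime of Lemma~\ref{lemma:s_i}, where $\xi_i^t = T_i x(t)$ up to a vanishing term. Then $\bm f(x(t))=0$, and by the positive definiteness of $\sum_i T_i^\top T_i$ established in the preceding lemma (via Assumption~\ref{asm:geo_con}), $\mathtt{x}^{t\star}=x(t)$ is the unique minimizer.

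\textbf{Step 1: Lyapunov identity for the linearized ADMM.} Write the stacked variables $X^t[k]$ and dual $\Psi^t[k]$, with $\psi_i^t[k]=\frac{\gamma}{2}\sum_{\ell\le k}\sum_j a_{ij}(x_i^t[\ell]-x_j^t[\ell])$. Define $\Lambda=\mathrm{diag}(1/\alpha_i)\otimes I_n - \frac{\gamma}{2}(\mathcal{A}+\mathcal{D})\otimes I_n$. This matrix is positive semidefinite by the choice \eqref{eq:step_size}, since $K_i$ dominates the curvature of $f_i$ and $\gamma d_i$ dominates the signless Laplacian row. Using convexity and $K_i$-smoothness of $f_i$ (descent lemma), and the optimality conditions of the $x$-update, derive the standard one-step inequality
\[
F(X^t[k+1]) - F(X^\star) + \langle \Psi^\star, \Gamma X^t[k+1]\rangle \le V_k - V_{k+1},
\]
where $X^\star=\mathbf{1}_N\otimes x(t)$, $\Gamma$ encodes the edge constraints, and $V_k$ is a weighted sum of $\|X^t[k]-X^\star\|^2$ (weights $1/\alpha_i$) and $\frac{1}{\gamma}\|\Psi^t[k]-\Psi^\star\|^2$. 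Summing over $k=0,\dots,\nu-1$ and applying Jensen's inequality to the ergodic average $\bar X^t_\nu=\frac1\nu\sum_{k=1}^\nu X^t[k]$ gives
\[
F(\bar X^t_\nu)-F(X^\star)+\langle\Psi,\Gamma\bar X^t_\nu\rangle\le V_0/\nu .
\]
The initialization $X^t[0]=0$, $\Psi^t[0]=0$ makes $V_0$ proportional to $\sum_i \frac{1}{\alpha_i}\|x(t)\|^2$ plus a dual term.

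\textbf{Step 2: split optimality gap and consensus violation.} Taking $\Psi$ over a ball of radius $\sim 1$ in $\image\Gamma$ bounds the consensus violation $\|\Gamma\bar X^t_\nu\|$ by $O(1/\nu)$. Dropping the dual term instead yields the optimality gap. The dual optimum can be chosen with norm controlled through $\lambda_2(\mathcal{L})$, because $\Psi^\star$ solves $\Gamma^\top\Psi^\star=-\nabla F(X^\star)$ and the Laplacian is invertible on $\mathbf{1}^\perp$. This is the source of the factor $1/\sqrt{\lambda_2(\mathcal{L})}$ and the $2/\gamma$ term.

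\textbf{Step 3: convert to iterate error.} Decompose each local average $\bar x_i=\frac1\nu\sum_k x_i^t[k]$ into the network mean $\bar x$ and the disagreement $\bar x_i-\bar x$. Apply \eqref{eq:strong_convexity} to $\bar x$: $\frac{\mu}{2}\|\bar x-x(t)\|^2\le \bm f(\bar x)$. Control $\bm f(\bar x)$ by $F(\bar X^t_\nu)$ plus a Lipschitz-weighted disagreement term. Bound the disagreement through $\lambda_2(\mathcal{L})\|\bar X-\mathbf 1\otimes\bar x\|^2\le \bar X^\top(\mathcal L\otimes I)\bar X$, which Step 2 controls. Taking square roots of the two $O(1/\nu)$ bounds produces the two $1/\sqrt{\nu}$ terms in \eqref{eq:thm}. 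The $\xi_i^t-T_ix(t)$ residual vanishes as $t\to\infty$ and is absorbed.

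\textbf{Step 4: part 2).} As $\nu\to\infty$ the right side of \eqref{eq:thm} tends to zero, so the ergodic average converges to $x(t)$. For the last iterate $x_i^t[\nu]$ itself, invoke the convergence of the linearized ADMM iterates: the $V_k$ are nonincreasing and bounded, and any limit point is a KKT point, which is unique since $\bm f$ is strongly convex. Hence $x_i^t[\nu]\to x(t)$.

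The main obstacle is Step 2/3: getting the constants in exactly the stated form, in particular bounding $\|\Psi^\star\|$ via $\lambda_2(\mathcal{L})$ and $\|x(t)\|$. I would first try choosing $\Psi^\star$ as the minimum-norm dual solution and using $\|\nabla f_i(x(t))\|\le K_i\|x(t)\|$, accepting looser constants if needed.
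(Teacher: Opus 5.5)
Your proposal follows essentially the paper's route: ergodic $O(1/\nu)$ optimality-gap and consensus-violation bounds for the linearized ADMM of Aybat et al.\ (which the paper cites rather than re-derives), strong convexity at the network mean, the Poincar\'e/$\lambda_2(\mathcal L)$ inequality for the disagreement, the triangle inequality, and iterate convergence of the inner loop for part 2). Your Step~3 control of $\bm f(\bar x)$ by $F(\bar X)$ plus a disagreement term (e.g.\ $\bm f(\bar x)\le 2F(\bar X)+\sum_i K_i\|\bar x-\bar x_i\|^2$) is actually more careful than the paper, which reaches the exact stated constants only by identifying $F$ of the non-consensual ergodic average with $\bm f$ of the network mean. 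The obstacle you anticipate does not arise, however. With $\xi_i^t=T_ix(t)$ every $f_i$ vanishes at $x(t)$, so $\nabla f_i(x(t))=0$ and $\Psi^\star=0$ is a dual optimum; your bound $\|\nabla f_i(x(t))\|\le K_i\|x(t)\|$ also ignores $\xi_i$. Consequently the $2/\gamma$ term comes from the unit dual ball used to extract the consensus violation, and the factor $1/\sqrt{\lambda_2(\mathcal L)}$ comes only from the Poincar\'e step, not from a bound on $\|\Psi^\star\|$.
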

\begin{pf}
By \cite[Theorem~7]{aybat2017distributed}, the sequence $x_i^t[k]$ converges to the optimal solution of \eqref{eq:init_pro}, i.e., $x(t)$, as $k\rightarrow \infty$. We now derive the claimed bound on the iteration-averaged estimation error.

Define $\bar{x}_i^t[\nu]\coloneq\frac{1}{\nu}\sum_{k=1}^\nu x_i^t[k]$, $\bar{x}^t[\nu]\coloneq\frac{1}{N}\sum_{i=1}^N x_i^t[\nu]$, and $\bar{X}^t[\nu]\coloneq\mathbf{1}_N\otimes \bar{x}^t[\nu]$.
Using the bounds claimed in \cite[Theorem~7]{aybat2017distributed}, we can obtain the following inequalities
\begin{align}
    &F(\bar{X}^t)-F^\star\le \frac{1}{2\nu}\left(\sum_{i=1}^N \frac{1}{\alpha_i}\|x(t)\|^2\right),\label{eq:bound_1}\\
    &\left(\sum_{(i,j)\in\mathcal{E}}\|\bar{x}_i^t[\nu]-\bar{x}_j^t[\nu]\|^2\right)^{\frac{1}{2}}\le \frac{2}{\nu\gamma}+\frac{1}{2\nu}\left(\sum_{i=1}^N \frac{1}{\alpha_i}\|x(t)\|^2\right).\label{eq:bound_2}
\end{align}
Since $F^\star=0$ and $\bm{f}(\mathtt{x})$ is $\mu$-strongly convex according to \eqref{eq:strong_convexity}, we have that
\begin{equation}
    F(\bar{X}^t[\nu])=\bm{f}(\bar{x}^t[\nu])=\sum_{i=1}^N f_i(\bar{x}^t[\nu])\ge\frac{\mu}{2}\|\bar{x}^t[\nu]-x(t)\|^2.\label{eq:Fmu}
\end{equation}
Integrating \eqref{eq:Fmu} with \eqref{eq:bound_1} yields
\begin{equation}
    \|\bar{x}^t[\nu]-x(t)\|\le\frac{1}{\sqrt{\nu\mu}}\left(\sum_{i=1}^N\frac{1}{\alpha_i}\right)^{\frac{1}{2}}\|x(t)\|.\label{eq:ineq_1}
\end{equation}
By applying the Poincar\'e inequality \cite{li2012geometric}, we get
\begin{equation}
\begin{aligned}
        \sum_{(i,j)\in\mathcal{E}}\|\bar{x}_i^t[\nu]-\bar{x}_j^t[\nu]\|^2 &\ge \lambda_2(\mathcal{L})\sum_{i=1}^N\|\bar{x}_i^t[\nu]-\bar{x}^t[\nu]\|^2\\
        &\ge \lambda_2(\mathcal{L})\|\bar{x}_i^t[\nu]-\bar{x}^t[\nu]\|^2.
\end{aligned}\label{eq:intro_lambda}
\end{equation}
Combining \eqref{eq:intro_lambda} and \eqref{eq:bound_2}, one can obtain
\begin{equation}
    \|\bar{x}_i^t[\nu]-\bar{x}^t[\nu]\|\le \frac{1}{\sqrt{\nu\lambda_2(\mathcal{L})}}\left(\frac{2}{\gamma}+\frac{1}{2}\left(\sum_{i=1}^N \frac{1}{\alpha_i}\|x(t)\|^2\right)\right).\label{eq:ineq_2}
\end{equation}
Finally, by the triangle inequality,
\begin{equation}
    \|\bar{x}_i^t[\nu]-x(t)\|\le \|\bar{x}_i^t[\nu]-\bar{x}^t[\nu]\|+\|\bar{x}^t[\nu]-x(t)\|,\label{eq:trian_ineq}
\end{equation}
and substituting \eqref{eq:ineq_1} and \eqref{eq:ineq_2} into \eqref{eq:trian_ineq} yields
\begin{equation*}
\begin{aligned}
        \frac{1}{\nu}\|\sum_{k=1}^\nu(x_i^t[k]-&x(t))\|=\|\bar{x}_i^t[\nu]-x(t)\|\\
        &\le \frac{1}{\sqrt{\nu\mu}}\left(\sum_{i=1}^N\frac{1}{\alpha_i}\right)^{\frac{1}{2}}\|x(t)\|\\
        &+\frac{1}{\sqrt{\nu\lambda_2(\mathcal{L})}}\left(\frac{2}{\gamma}+\frac{1}{2}\left(\sum_{i=1}^N \frac{1}{\alpha_i}\|x(t)\|^2\right)\right),
\end{aligned}
\end{equation*}
which completes the proof.
    \hfill$\square$
\end{pf}
\begin{rem}
    The right-hand side of \eqref{eq:thm} consists of two components. The first term, stemming from the global $\mu$-strong convexity, characterizes the rate at which the iteration-averaged estimates converge toward the true state. The second term, determined by the network connectivity $\lambda_2(\mathcal{L})$ and the penalty parameter $\gamma$, quantifies the decay of the consensus error across the network.\\
    Although an explicit bound for the instantaneous estimation error $x_i(t)-x(t)$ is not explicit, \eqref{eq:thm} nonetheless reveals how the convergence of the iteration-averaged error $\frac{1}{\nu}\sum_{k=1}^\nu (x_i^t[k]-x(t))$ depends on the strong convexity constant $\mu$, the algebraic connectivity $\lambda_2(\mathcal{L})$, the communication/iteration frequency $\nu$, the step size choice $\alpha_i$, and the real time system state $x(t)$. \\
    Hence, for sparse networks with small algebraic connectivity or systems with poorly conditioned $\sum_{i=1}^N T_i^\top T_i$, a larger communication frequency $\nu$ and the preconditioning step become crucial to maintain a reasonable convergence rate.
\end{rem}
It is worth noting that the convergence rate of Algorithm~\ref{alg:my_alg} critically depends on the global strong convexity parameter $\mu = \lambda_{\mathrm{min}}\left(\sum_{i=1}^N T_i^\top T_i\right)$, which arises from the curvature of the aggregate objective $\bm{f}(\mathtt{x})\coloneq\sum_{i=1}^N f_i(\mathtt{x})$. When $\lambda_{\min} \left( \sum_{i} T_i^\top T_i \right) \ll 1$, the strong convexity modulus $\mu$ becomes very small.
Since the convergence upper bound \eqref{eq:thm} guarantees scale inverse-proportionally with $\sqrt{\mu}$, a small global curvature directly results in significantly degraded convergence speed.\\
To mitigate this limitation, it is necessary to introduce an appropriate normalization step using Cholesky factorization \cite[Corollary~7.2.9]{horn2012matrix} to enhance the effective strong convexity seen by the algorithm.\\
Let $H = S^\top S$ be the Cholesky factorization of $H=\sum_{i=1}^N T_i^\top T_i$. Then, we can define the transformed local matrices
    \begin{equation}\label{eq:Ttilde}
    \widetilde T_i := T_i S^{-1}.
    \end{equation}
Accordingly, the transformed optimization problem can be rewritten as follows
    \begin{equation}\label{eq:trans_pro}
    \min_{\mathtt{z}\in\mathbb{R}^n}\ \widetilde{\bm f} (\mathtt{z})
    := \sum_{i=1}^N \widetilde f_i(\mathtt{z}),
    \end{equation}
where $\widetilde f_i(\mathtt{z})
    := \frac12 \|\widetilde T_i \mathtt{z} - \xi_i\|^2$.
Then we have the following lemma holds.
\begin{lem}\label{lemma:precondition}
    The optimal solution $\mathtt{z}^\star$ of the optimization problem \eqref{eq:trans_pro} has the following relationship with $\mathtt{x}^\star$
    \begin{equation}
        \mathtt{x}^\star=S^{-1}\mathtt{z}^\star,
    \end{equation}
    and the function $\widetilde{\bm{f}}(\mathtt{z})$ is 1-strongly convex.
\end{lem}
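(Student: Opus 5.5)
The plan is to use the change of variables $\mathtt{z}=S\mathtt{x}$, which is a bijection of $\mathbb{R}^n$ because the Cholesky factor $S$ is invertible. Under Assumption~\ref{asm:geo_con}, $H=\sum_{i=1}^N T_i^\top T_i$ is positive definite, as shown in the proof of the preceding lemma. Hence the Cholesky factorization $H=S^\top S$ exists with $S$ upper triangular and having a positive diagonal, so $S$ is invertible.

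The first step is the identity $\widetilde T_i \mathtt{z}=T_iS^{-1}\mathtt{z}$. From it, $\widetilde f_i(\mathtt{z})=f_i(S^{-1}\mathtt{z})$ for every $i$, and summing gives $\widetilde{\bm f}(\mathtt{z})=\bm f(S^{-1}\mathtt{z})$. Since $\mathtt{z}\mapsto S^{-1}\mathtt{z}$ is a bijection, minimizing $\widetilde{\bm f}$ over $\mathtt{z}$ is the same as minimizing $\bm f$ over $\mathtt{x}=S^{-1}\mathtt{z}$. The minimizer $\mathtt{x}^\star$ of $\bm f$ is unique because $\bm f$ is $\mu$-strongly convex with $\mu>0$. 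It follows that $\mathtt{z}^\star$ is unique and satisfies $\mathtt{x}^\star=S^{-1}\mathtt{z}^\star$.

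As a cross-check, I would also compute this explicitly from first-order conditions. Setting the gradient to zero gives
$$\Big(\sum_i \widetilde T_i^\top\widetilde T_i\Big)\mathtt{z}^\star=\sum_i\widetilde T_i^\top\xi_i,$$
and the left-hand matrix is $S^{-\top}HS^{-1}=S^{-\top}S^\top SS^{-1}=I_n$. Therefore $\mathtt{z}^\star=S^{-\top}\sum_i T_i^\top\xi_i$. Applying $S^{-1}$ gives $S^{-1}\mathtt{z}^\star=(S^\top S)^{-1}\sum_iT_i^\top\xi_i=H^{-1}\sum_iT_i^\top\xi_i=\mathtt{x}^\star$, which agrees with the closed form obtained in the proof of the preceding lemma.

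For strong convexity, $\widetilde{\bm f}$ is quadratic with constant Hessian $\nabla^2\widetilde{\bm f}=\sum_i\widetilde T_i^\top\widetilde T_i=I_n$, computed above. Its minimal eigenvalue is therefore $1$, which gives 1-strong convexity. Equivalently, the exact expansion $\widetilde{\bm f}(\mathtt{z})-\widetilde{\bm f}(\mathtt{z}^\star)=\frac12\|\mathtt{z}-\mathtt{z}^\star\|^2$, in the form of \eqref{eq:strong_convexity} with $\mu=1$, holds because the gradient vanishes at $\mathtt{z}^\star$.

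There is no real obstacle here. The only point needing care is justifying that $S$ exists and is invertible, which rests on the positive definiteness of $H$ under Assumption~\ref{asm:geo_con}.
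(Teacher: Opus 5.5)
Your proposal is correct and follows essentially the same route as the paper: the identity $\sum_i \widetilde T_i^\top \widetilde T_i = S^{-\top} H S^{-1} = I_n$ gives 1-strong convexity, and the bijection $\mathtt{z} = S\mathtt{x}$ with $\widetilde{\bm f}(\mathtt{z}) = \bm f(S^{-1}\mathtt{z})$ gives $\mathtt{x}^\star = S^{-1}\mathtt{z}^\star$. Your additional justification that $S$ exists and is invertible (via positive definiteness of $H$ under Assumption~\ref{asm:geo_con}) and your first-order cross-check are correct additions that the paper leaves implicit.
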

\begin{pf}
    From \eqref{eq:Ttilde}, we have $\widetilde T_i = T_i S^{-1}$, which yields
\begin{equation*}
    \sum_{i=1}^N \widetilde T_i^\top \widetilde T_i
    = S^{-\top}\Big(\sum_{i=1}^N T_i^\top T_i\Big)S^{-1}
    = S^{-\top} H S^{-1}
    = I_n.
\end{equation*}
Hence $\nabla^2\widetilde{\bm f}(\mathtt{z}) = I_n$, which implies $1$--strong convexity. 

Moreover, since $z_i = Sx_i$ is a bijection, we have
\begin{equation*}
    \widetilde{\bm f}(\mathtt{z})
    = \sum_{i=1}^N \frac12\|T_i S^{-1}\mathtt{z} - \xi_i\|^2
    = {\bm f}(S^{-1}\mathtt{z}).
\end{equation*}
Therefore
\begin{equation*}
    \mathtt{z}^\star = \arg\min_{\mathtt{z}} \widetilde{\bm f}(\mathtt{z})
    \iff
    S^{-1}\mathtt{z}^\star = \arg\min_{\mathtt{x}} \bm{f}(\mathtt{x})=\mathtt{x}^\star,
\end{equation*}
which completes the proof.
\hfill$\square$
\end{pf}
With Lemma~\ref{lemma:precondition}, we can replace $T_i$ with $\widetilde T_i$ in Algorithm~\ref{alg:my_alg} to calculate $z_i(t)$ and then obtain $x_i(t)$ by $x_i(t)=S^{-1}z_i(t)$ as illustrated as follows
\begin{equation*}
\left\{
\begin{aligned}
    &\text{Replace}\ T_i\ \text{with}\ \widetilde T_i\\
    &\text{Normalized Algorithm~\ref{alg:my_alg} outputs}\ z_i(t)\\
    &\text{Obtain}\ x_i(t)\ \text{by}\ x_i(t)=S^{-1}z_i(t)
\end{aligned}\right.\tag{PS}\label{preconditioned_step}
\end{equation*}

By transforming the original variables through a suitable linear mapping (e.g., based on the Cholesky factor of $\sum_{i=1}^N T_i^\top T_i$), the global curvature can be normalized, the condition number of the objective can be substantially improved, and consequently the algorithm iterations achieve significantly faster practical and theoretical convergence. This advantage of the preconditioning step is fully reflected in the simulation results in Section~\ref{sec:simulate}.

\section{Simulation Results}\label{sec:simulate}
Consider a discrete-time LTI system with system matrix and input matrix as follows
\begin{equation*}
\setlength{\arraycolsep}{2.5pt}
\begin{aligned}
    &\ A =\\
    &\begin{bmatrix}
    0.9925&    0.1496&    0.0037&    0.0001&         0&         0&\\
   -0.0998&    0.9925&    0.0498&    0.0024&         0&         0\\
         0&         0&    0.9928&    0.0949&         0&         0\\
         0&         0&   -0.1424&    0.8978&         0&         0\\
    0.0002&   -0.0056&   -0.0037&    0.0472&    0.9850&   -0.1493\\
   -0.0037&    0.0744&    0.0016&    0.0049&    0.1990&    0.9850
    \end{bmatrix},
\end{aligned}
\end{equation*}
\begin{equation*}
\begin{aligned}
&\ B^\top =
\begin{bmatrix}
    \bm b_a&\bm b_b&\bm b_c
\end{bmatrix}^\top
=\\
&\begin{bmatrix}
    0.0037&    0.0499&         0&         0&    0.0497&    0.0069\\ \hdashline
         0&         0&    0.0024&    0.0475&    0.0012&    0.0001\\\hdashline
    0.0001&    0.0012&    0.0499&   -0.0036&   -0.0038&    0.0498
\end{bmatrix}.
\end{aligned}
\end{equation*}
The system input signal is given by \[ u(t) = \begin{bmatrix} \mathtt{sin}(0.1t) & \mathtt{cos}(0.5t) & 0.5\mathtt{sin}(0.5t) \end{bmatrix}^\top, \] which can be separated as \( u = \begin{bmatrix} u_a & u_b & u_c \end{bmatrix}^\top \). The partial input signals known by each individual node are given by \( u_1 = u_b \), \( u_2 = \begin{bmatrix} u_a & u_c \end{bmatrix}^\top \), \( u_3 = u_c \), and \( u_4 = \begin{bmatrix} u_b & u_c \end{bmatrix}^\top \). Accordingly, the known and unknown input channels for each node are specified as follows
\begin{equation*}
    \begin{aligned}
        &B_1 = \bm b_b = \bar{B}_2,\ \bar{B}_1=\matrices{\bm b_a&\bm b_c} = B_2,\\
        &B_3 = \bm b_c,\ \bar{B}_2 = \matrices{\bm b_a&\bm b_b},\ 
        B_4 = \matrices{\bm b_b&\bm b_c},\ \bar B_4=\bm b_a.
    \end{aligned}
\end{equation*}
The whole system outputs are measured by $4$ sensors with the output matrices
\begin{equation*}
    \begin{aligned}
        &C_1=\begin{bmatrix}
        1& 0& 0& 0& 0& 0
        \end{bmatrix},\ 
        C_2=\begin{bmatrix}
        0& 1& 0& 0& 0& 0
        \end{bmatrix},\\
        &C_3=\begin{bmatrix}
        0& 0& 1& 0& 0& 0
        \end{bmatrix},\ 
        C_4=\begin{bmatrix}
        0& 0& 0& 1& 0& 1
        \end{bmatrix}.
    \end{aligned}
\end{equation*}
The underlying communication network of the sensor nodes are given by the graph illustrated in Fig.~\ref{fig:topo_for_num_exam}.
\begin{figure}
    \centering
    \scalebox{0.8}{\begin{tikzpicture}
\def\off{18}
\def\N{5}
\def\R{2.5}
\pgfmathparse{360/\N}
\edef\step{\pgfmathresult}

\colorlet{net}{teal}
\tikzset{comm/.style = {color=net, very thick, dash pattern=on 4pt off 1.5pt}}
\node [circle, 
        draw, 
        color=net, 
        fill=white, 
        text=black, 
        very thick,
        inner sep=6pt] (O1) at (0,2) {$\mathcal O_{1}$};
\node [circle, 
        draw, 
        color=net, 
        fill=white, 
        text=black, 
        very thick,
        inner sep=6pt] (O2) at (2,2) {$\mathcal O_{2}$};
\node [circle, 
        draw, 
        color=net, 
        fill=white, 
        text=black, 
        very thick,
        inner sep=6pt] (O3) at (2,0) {$\mathcal O_{3}$};
\node [circle, 
        draw, 
        color=net, 
        fill=white, 
        text=black, 
        very thick,
        inner sep=6pt] (O4) at (0,0) {$\mathcal O_{4}$};

\draw[comm] (O1) to[out=40, in=140] (O2);
\draw[comm] (O2) to[out=-40, in=40] (O3);
\draw[comm] (O3) to[out=-140, in=-40] (O4);
\draw[comm] (O4) to[out=140, in=-140] (O1);

\end{tikzpicture}}
    \caption{Communication network for numerical example.}
    \label{fig:topo_for_num_exam}
\end{figure}
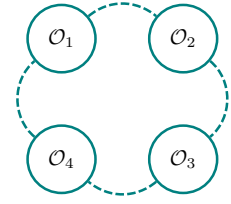

Based on the geometric approach \cite{zhao2025DUIO,zhao2025bridging}, we can compute $L_i$ and $P_{W_{g,i}^*}$ as follows
\begin{align*}
        &L_1^\top=\matrices{0&     0&     0&     0&     0&     0},\ L_3^\top=\matrices{0&    0&   -1.9754&    0.0998&   0&    0},\\
        &L_2^\top\!=\!\matrices{-0.2244&   -0.9990&    0.0252&   -0.0090&    0.4797&   -2.0659}\!,\\
        &L_4^\top=\matrices{-1.0799&    0.0809&    0&    0&    0.2710&   -1.9582},
\end{align*}
\begin{align*}
        &P_{W_{g,1}^*}=\matrices{0&     0&     0&     0&     0&     0},\\ &P_{W_{g,2}^*}=\matrices{0.0984&    0.9939&   -0.0491&    0.0025&   0&   0},\\
        &P_{W_{g,3}^*}=\matrices{0&    0&    0.9987&   -0.0505&   0&    0},\\
        &P_{W_{g,4}^*}=\matrices{ 0&    0&    0&    1&    0&    0\\
        0&    0.0035&   -0.9957&    0&    0.0093&   -0.0923\\
       0&   -0.0375&   -0.0928&    0&   -0.0998&    0.9900}.
\end{align*}
\begin{figure}[htp]
    \centering
    \includegraphics[width=\linewidth]{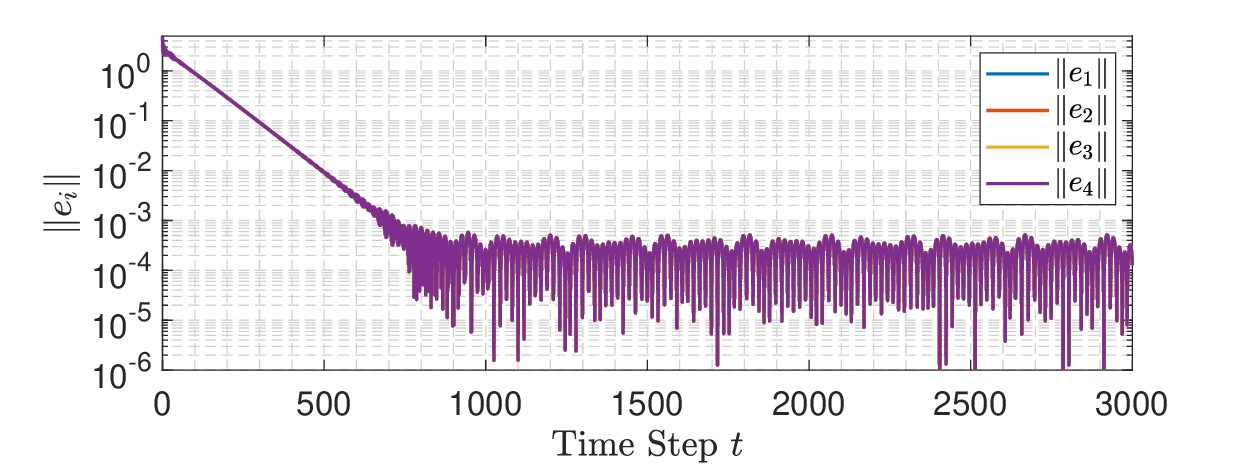}\\[-2.5ex]
    \caption{Estimation errors using initial Algorithm~\ref{alg:my_alg} with 30000 iterations per time step.}
    \label{fig:estimation_error_1}
\end{figure}
\begin{figure}[htp]
    \centering
    \includegraphics[width=\linewidth]{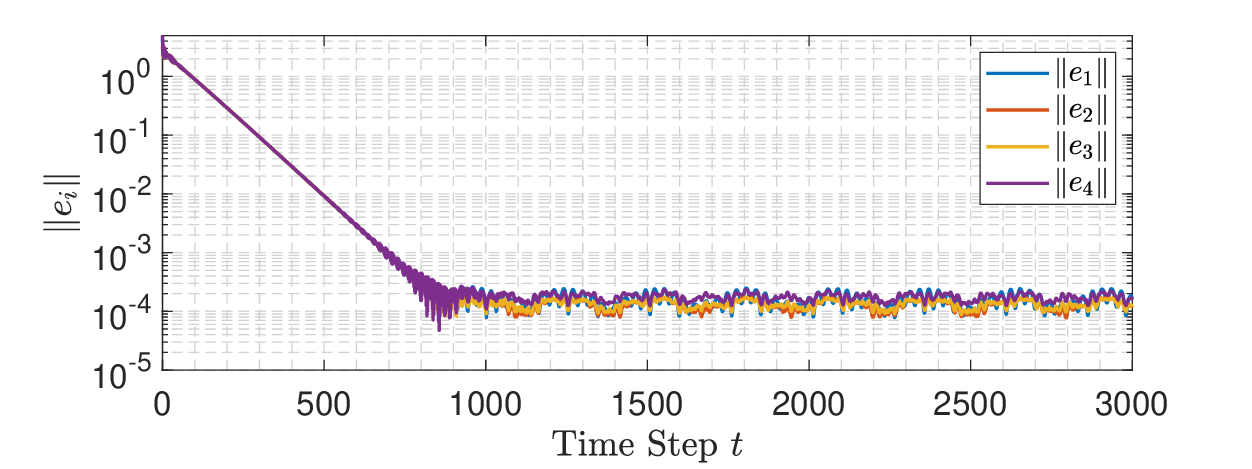}\\[-2.5ex]
    \caption{Estimation errors using normalized Algorithm~\ref{alg:my_alg} with \eqref{preconditioned_step} and 50 iterations per time step.}
    \label{fig:estimation_error_2}
\end{figure}
\begin{figure}[htp]
    \centering
    \includegraphics[width=\linewidth]{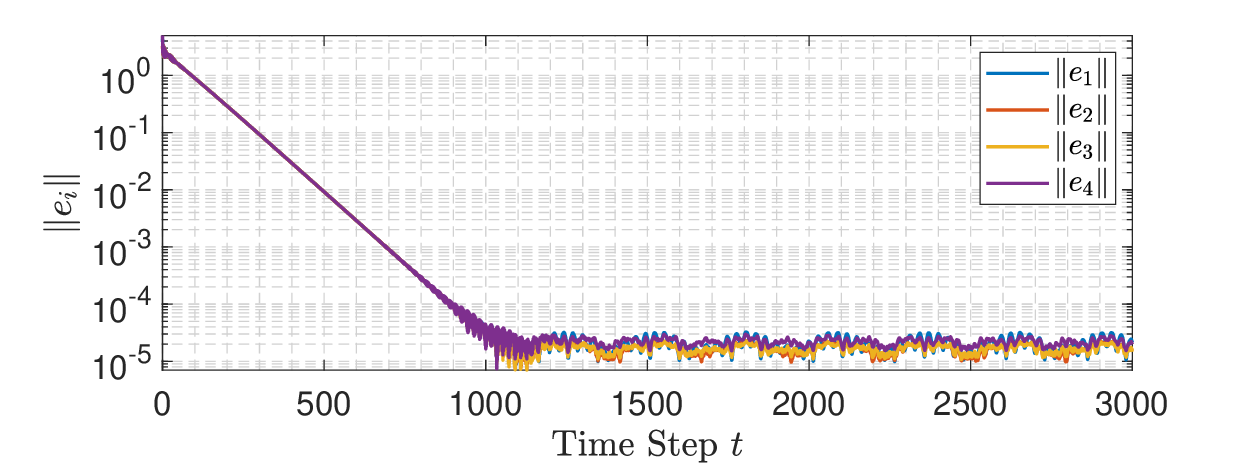}\\[-2.5ex]
    \caption{Estimation errors using normalized Algorithm~\ref{alg:my_alg} with \eqref{preconditioned_step} and 60 iterations per time step.}
    \label{fig:estimation_error_3}
\end{figure}
With the penalty parameter fixed at $\gamma=0.5$ and the step size $\alpha_i$ derived from \eqref{eq:step_size}, we can verify the performance of the proposed framework.

Fig.~\ref{fig:estimation_error_1} illustrates the convergence of estimation errors for the initial Algorithm~\ref{alg:my_alg}. While the error remains bounded, the convergence is inexact due to the finite $\nu$, corroborating our main theorem. The convergence of the consensus part of initial Algorithm~\ref{alg:my_alg}, requiring 30,000 iterations per time step, is attributed to the system's poor conditioning. Specifically, the eigenvalues of $\sum_{i=1}^N T_i^\top T_i$ are $\{0.0034,\ \allowbreak0.9970,\ \allowbreak1.0022,\ \allowbreak1.9966,\ \allowbreak2.9658,\ \allowbreak3.0351\}$. The presence of a eigenvalue that is much smaller than the other eigenvalues of $\sum_{i=1}^N T_i^\top T_i$, which leads to a small convexity constant $\mu$ and severely slow convergence rate of consensus part requiring $\nu=30000$.

However, by employing Algorithm~\ref{alg:my_alg} with normalized steps stated in \eqref{preconditioned_step}, we achieve similar estimation accuracy ($10^{-4}$ level) with only 50 iterations per time step, i.e. $\nu=50$, as shown in Fig.~\ref{fig:estimation_error_2}. This comparison demonstrates the practical necessity and effectiveness of the normalized step \eqref{preconditioned_step} for the distributed state estimation problem. Furthermore, by increasing the communication frequency to $\nu=60$, the estimation errors diminish to the $10^{-5}$ level as illustrated in Fig.\ref{fig:estimation_error_3}.

\section{Conclusion}\label{sec:conclude}
This paper proposes a novel two-time-scale framework for discrete-time Distributed Unknown Input Observer (DUIO) design, formulated as a distributed optimization problem. Crucially, our design eliminates the restrictive rank condition typically required for disturbance decoupling in the existing literature. In this framework, each local node estimates the maximum partial state based on local input and output information; subsequently, all sensor nodes collaboratively reach consensus via a distributed optimization algorithm to reconstruct the global system state. Furthermore, a preconditioning step is introduced to accelerate the convergence rate in scenarios where the strong convexity constant is small due to ``ill-conditioned'' system parameters. Simulation results validate the effectiveness of the proposed design and demonstrate the performance benefits of the preconditioning step. Future work will focus on extending this distributed optimization-based DUIO method to nonlinear systems and considering communication uncertainties \cite{bastianello2020asynchronous,bastianello2024robust}.


\end{document}